\newcommand{\gf}{\mathbb{F}_2}
\title{Cancellation-free circuits: An approach for proving
superlinear lower bounds for linear Boolean operators\thanks{Partially supported by the 
Danish Council for Independent Research, Natural Sciences.}
}
\titlerunning{Cancellation-free circuits}
\author{Joan Boyar
\and
Magnus Gausdal Find
\institute{Department of Mathematics and Computer Science\\ University of
Southern Denmark\\}
\email{\\\{joan, magnusgf\}@imada.sdu.dk}
}
\authorrunning{Joan Boyar\and Magnus G. Find}
\begin{document}

\maketitle

\begin{abstract}
We continue to study the notion of \emph{cancellation-free linear
circuits}.
We show that every matrix can be computed by a cancellation-free circuit,
and almost all of these are at most a constant factor larger than
the optimum linear circuit that computes the matrix.
It appears to be easier to prove statements about
the structure of cancellation-free linear circuits
than for linear circuits in general. We prove two
nontrivial superlinear lower bounds.
We show that a cancellation-free linear circuit computing
the $n\times n$ Sierpinski gasket
matrix must use at least $\frac{1}{2}n\log n$ gates, and that this is
tight.
This supports a conjecture by Aaronson.
% The proof is based on gate elimination. This makes
% it the first, to the authors' knowledge, superlinear lower bound
% using gate elimination.
Furthermore we show that a proof strategy
for proving lower bounds on monotone circuits can be almost 
directly converted to prove lower bounds on cancellation-free linear
circuits.
We use this together with a result from
extremal graph theory due to Andreev 
to prove a lower bound of $\Omega(n^{2-\epsilon})$
for infinitely many $n\times n$ matrices
for every $\epsilon>0$ for.
These lower bounds for concrete matrices are almost optimal
since all matrices can be computed with
$O\left(\frac{n^2}{\log n}\right)$
gates.
\end{abstract}

\section{Introduction and Known Results}
Let $\gf$ be the Galois field of order $2$, and let $\gf^n$
be the $n$-dimensional vector space over $\gf$.
A Boolean function $f: \gf^n \rightarrow \gf^m$ is said
to be linear if there exists a Boolean $m\times n$ matrix $A$
such that $f(\mathbf{x})=A\mathbf{x}$ for every $\mathbf{x}\in \gf^n$.
This is equivalent of saying that $f$ can be computed using
only XOR gates.

An  \emph{XOR-AND circuit} $C$ is a directed acyclic graph.
There are $n+1$ 
nodes with in-degree $0$, called the \emph{inputs} one
of these is the constant value $1$.
All other nodes have in-degree $2$ and are called \emph{gates}.
Every gate is labeled either $\oplus$ (XOR) or $\wedge$ (AND).
There are $m$ gates which are called the \emph{outputs}; these are
labeled $y_1,\ldots,y_m$.
The value of a gate labeled $\wedge$ is the product of its inputs
(children), and
the value of a gate labeled $\oplus$ is the sum
of its two children (addition in $\gf$, denoted $\oplus$).
The circuit $C$, with
inputs $\mathbf{x}=(x_1,\ldots ,x_n)$,
\emph{computes} the $m\times n$ matrix $A$
if the output vector computed by $C$,
$\mathbf{y}=(y_1,\ldots , y_m)$,
satisfies $\mathbf{y}=A\mathbf{x}$.
In other words, output $y_i$ is defined by the $i$th row of the matrix.
The \emph{size} of
a circuit $C$, denoted $|C|$, is the number of gates in $C$. 
For simplicity, we will let $m=n$ unless otherwise is explicitly stated.
A circuit is \emph{linear} if every gate is labeled $\oplus$.

%known upper bounds:
For relatively dense matrices,
computing all the rows independently gives $\Theta(n)$ gates
for each output, that is a circuit of size $\Theta(n^2)$.
It follows from a theorem by Lupanov
\cite{juknabook,Lupanov1965})
that this upper bound can be improved.
\begin{theorem}[Lupanov]
\label{lupanovthm}
  Every $n\times n$ matrix can be computed using a circuit of size
\[
(1+o(1))\frac{n^2}{\log n}.
\]
\end{theorem}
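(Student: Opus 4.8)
The plan is to use a precomputation (table-lookup) strategy, in the spirit of the ``Four Russians'' method: group the input variables into small blocks, precompute every possible XOR-combination of the variables within each block once, and then assemble each output by combining a single precomputed value per block. Since each output $y_i$ is the XOR of the inputs in the support of the $i$th row, the contribution of one block to $y_i$ is exactly one of the precomputed block-sums, so the only remaining work is to combine these contributions across blocks.

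Concretely, I would fix a block size $s$ (to be chosen) and partition the $n$ input variables into $\lceil n/s\rceil$ consecutive blocks of at most $s$ variables each. For a single block on variables $x_{i_1},\dots,x_{i_s}$, all $2^s$ subset-sums can be computed with at most $2^s$ XOR gates: order the subsets so that each nonempty, non-singleton subset is obtained from an already-computed one by adjoining one variable, at a cost of one gate apiece. Over all blocks this precomputation phase costs at most $\tfrac{n}{s}2^s$ gates. In the assembly phase, each output $y_i$ equals the XOR, over the $\lceil n/s\rceil$ blocks, of the precomputed block-sum selected by the row-$i$ entries inside that block; combining these takes $\lceil n/s\rceil-1$ gates per output, hence at most $n\cdot\tfrac{n}{s}$ gates for all $n$ outputs.

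The total size is therefore at most $\tfrac{n}{s}2^s + (1+o(1))\tfrac{n^2}{s}$, and the remaining task is to choose $s$ so that the precomputation term becomes negligible while the assembly term carries the stated leading constant. Taking $s=\lfloor \log n - 2\log\log n\rfloor$ gives $2^s \le n/(\log n)^2$, so the precomputation cost is $O\!\left(\tfrac{n^2}{(\log n)^3}\right)=o\!\left(\tfrac{n^2}{\log n}\right)$, whereas the assembly cost is $\tfrac{n^2}{s}=\tfrac{n^2}{\log n}\cdot\tfrac{1}{1-2\log\log n/\log n}=(1+o(1))\tfrac{n^2}{\log n}$. Summing the two phases then yields the claimed bound.

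I expect the main obstacle to be pinning down the leading constant rather than the method itself: one must place $s$ in the narrow regime where $2^s$ is small enough (a little below $n$) to render the precomputation lower-order, yet large enough that $\tfrac{n^2}{s}$ remains $(1+o(1))\tfrac{n^2}{\log n}$. The only other care needed is routine bookkeeping for non-integer $\log n$ and for the last block when $s$ does not divide $n$, both of which affect only lower-order terms.
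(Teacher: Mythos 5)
Your proof is correct, and it takes a genuinely different route from the paper. The paper does not prove Theorem~\ref{lupanovthm} directly: it cites Lupanov, and the construction it actually works with (in the lemma of Section~3) goes through Lupanov's \emph{rectangular decomposition} $A=B_1+\cdots+B_k$ into rank-one matrices of total weight $(1+o(1))\frac{n^2}{\log n}$, from which the circuit is read off by computing one column-sum per $B_i$ and distributing it to the relevant rows. Your argument is instead a self-contained ``Four Russians'' construction: partition the columns into blocks of size $s\approx\log n-2\log\log n$, precompute all $2^s$ subset-sums per block (cost $\frac{n}{s}2^s=O(n^2/(\log n)^3)$, which is lower order), then assemble each output from one precomputed value per block (cost $(1+o(1))\frac{n^2}{\log n}$, which carries the leading constant). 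The two constructions are close cousins --- Lupanov's decomposition also works blockwise, but computes only the row-patterns that actually occur in each block rather than all $2^s$ subset-sums; your extra precomputation is wasteful but provably lower-order, which is exactly the point of choosing $2^s$ a bit below $n$. What your route buys is a complete, elementary proof with no external black box; what the decomposition route buys is a reusable structural statement about the matrix itself, which the paper exploits again when phrasing its cancellation-free lemma. One further observation worth making explicit: your circuit is itself cancellation-free --- the blocks are disjoint variable sets, each output uses at most one precomputed sum per block, and the precomputed sums are built by adjoining fresh variables --- so your argument simultaneously proves the paper's later lemma that every matrix has a cancellation-free circuit of size $(1+o(1))\frac{n^2}{\log n}$, with no appeal to Lupanov at all.
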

A counting argument shows, that this is asymptotically tight.
In fact, the vast majority of
matrices require this number of gates up to a constant factor.
Despite this fact, there is no known concrete family of matrices requiring
superlinear size \cite{juknabook}.

%known lower bounds for specific classes:
Another, but related circuit model is the one where we allow unbounded
fan-in and arbitrary gates
(that is gates computing \emph{any} predicate are allowed), but
require bounded depth.
The circuit complexity of such a circuit is the number of wires. 
Here the lower bound situation is a little better;
Alon, Karchmer and Wigderson \cite{AlonKW90} showed in 1990 
 that a particular
family of matrices requires $\Omega(n\log n)$ wires for linear circuits
in this model.
This has recently been improved by Gál et al. \cite{GalHKPV11} 
 who have proven that a concrete infinite family of matrices require
$\Omega\left(n\left(\frac{\log n}{\log \log n}\right)^2\right)$
wires when computed in depth 2.
Recently Drucker \cite{Drucker11a} gave a survey of the strategies
used for proving lower bounds on wire complexity
for general (not necessarily linear) Boolean operators in bounded
depth, and the limitations of these.

Returning to the circuit model with bounded fan-in,
the situation is even worse for
general Boolean predicates. Here we know
by a seminal result by Shannon
\cite{Shannon1949synthesis,Wegener87},
that almost every function
requires $\Omega(2^n/n)$
gates,
but again no superlinear bound is known for a concrete
family of functions.
A popular, and essentially the only known,
technique for proving non-trivial linear lower bounds is the technique of 
\emph{gate-elimination}. The key idea when using gate
elimination is to
set some of the inputs to constant values, arguing that a certain
number of gates 
get ``eliminated'' and that this results in a
function inductively assumed  to have
a certain size.
Gate elimination was first used by Schnorr \cite{Schnorr74} to
prove a $2n$ lower bound, and later improved by Paul
\cite{Paul77} and again by Blum \cite{Blum84} who in
1984 presented a $3n$ lower bound 
for a family of functions when using the full
binary basis.
This is still the best concrete lower bound known \cite{juknabook}.
For a description of the gate-elimination method see the
survey of Boppana and Sipser
\cite{BoppanaS90} or the essay by Blum \cite{DBLP:conf/birthday/Blum09}.
In both of these it is mentioned that it is unlikely that the
gate elimination method will ever yield superlinear lower bounds.

In the case of general Boolean functions there are a
number of functions
conjectured to have superlinear size, examples include any
$NP$-complete language. 
For linear operators there are, as far as the authors know,
only few families of matrices conjectured to have superlinear size.
One of these include the Sierpinski gasket
matrix, (Aaronson, personal communication
and \cite{cstheorystackexchange})
described later in this paper.

One proof strategy for proving lower bounds is to prove
lower bounds for a restricted circuit model, and to prove that
sizes of circuits computing a function in the restricted circuit model
are not too much larger than in the original model.
This was essentially the motivation for looking at monotone circuits.
In \cite{Razborov1985lower},
Razborov gave a superpolynomial lower bound for the Clique function for
monotone circuits.
The hope was at that time, that the monotone circuit complexity
was polynomially related to general Boolean circuit complexity.
This was disproven by Razborov in \cite{razborov1985permanent},
showing that the gap was superpolynomial. For more details, see e.g.
\cite{BoppanaS90}.

\section{Cancellation-free Linear Circuits}
\label{cancelfreelinear}
For linear circuits, the value computed by every gate is the parity function
of some subset of the $n$ variables.
That is, the output of every gate $u$ can be considered
as a vector $\kappa(u)$ in the vector space $\gf^n$,
where $\kappa(u)_i=1$ if and only
if $x_i$ is a term in the parity function computed by the gate $u$.
We call $\kappa(u)$ the \emph{value vector} of $u$, and
for input variables define
$\kappa(x_i)=e^{(i)}$, that is the unit vector having the
$i$th coordinate $1$ and all other $0$.
It is clear by definition that if a gate $u$ has the two children $w,t$, then
$\kappa(u)=\kappa(w)\oplus~\kappa(t)$, where $\oplus$ denotes coordinate wise
addition in $\gf$.
We say that a linear circuit is \emph{cancellation-free} if for
every pair of gates $u,w$ where  $u$ is
an ancestor of $w$ then $\kappa(u)\geq \kappa(w)$,
where $\geq$ denotes the usual coordinatewise partial order.
That is, if $x_i$ is a term in a gate $w$ it is a term in all subsequent
gates.
The intuition behind this
is that if this condition is satisfied, the circuit never
exploits the fact that in $\gf$, $a\oplus a=0$.
That is, things do not ``cancel out'' in the circuit.
By definition, it is clear that any linear operator can be computed
by a cancellation-free circuit. The proposition comes directly
from the definition of cancellation-free

\begin{proposition}
\label{equiprop}
  The following are equivalent:
  \begin{itemize}
  \item $C$ is cancellation-free
  \item For every pair of vertices $v_1,v_2$ in $C$, there do
not exist two disjoints paths in $C$ from $v_1$ to $v_2$
\item For every $v$ where $\kappa (v)_i=0$ there is no path
from $x_i$ to $v$
\item $C$ does not contain the triangle $K_3$
as an undirected minor
  \end{itemize}
\end{proposition}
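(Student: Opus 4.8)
The plan is to reduce the global cancellation-free condition to a purely local one and then read off the three combinatorial reformulations from it. First I would record the \emph{local} form of cancellation-freeness: a linear circuit is cancellation-free if and only if, at every gate $u$ with children $w,t$, the supports of $\kappa(w)$ and $\kappa(t)$ are disjoint, so that $\kappa(u)=\kappa(w)\oplus\kappa(t)$ genuinely satisfies $\kappa(u)\geq\kappa(w)$ and $\kappa(u)\geq\kappa(t)$. From this local statement an easy induction along any directed path yields \emph{support monotonicity}: whenever there is a directed path from a vertex $a$ to a vertex $v$, we have $\kappa(v)\geq\kappa(a)$; in particular a directed path from $x_i$ to $v$ forces $\kappa(v)_i=1$. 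I would also use the converse triviality, valid in \emph{any} linear circuit, that if there is no directed path from $x_i$ to $v$ then $v$ cannot depend on $x_i$, so $\kappa(v)_i=0$.

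With these tools the equivalence of the first and third bullets is immediate. Support monotonicity gives one direction: in a cancellation-free circuit a path from $x_i$ to $v$ yields $\kappa(v)_i=1$, so $\kappa(v)_i=0$ forbids such a path. For the converse I argue by contraposition: if cancellation-freeness fails, some gate $u$ with children $w,t$ has a common support coordinate $i$, whence $\kappa(u)_i=1\oplus 1=0$; but $\kappa(w)_i=1$ already guarantees a directed path from $x_i$ to $w$ and hence to $u$, contradicting the third bullet. For the second bullet I would prove that cancellation-freeness is equivalent to the absence of two internally disjoint directed paths. If two such paths run from $v_1$ to $v_2$, their final edges enter $v_2$ from two distinct children $p,q$; the path prefixes give directed paths from $v_1$ to $p$ and to $q$, so support monotonicity forces $\kappa(p),\kappa(q)\geq\kappa(v_1)$, and since $\kappa(v_1)\neq 0$ the children $p,q$ share support, a cancellation at $v_2$. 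Conversely, a cancellation at a gate $u=w\oplus t$ produces two \emph{distinct} directed paths from the shared input $x_i$ to $u$ (one through $w$, one through $t$), and the standard surgery of cutting at the last vertex they have in common extracts two internally disjoint directed paths between a suitable pair, establishing the contrapositive.

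The step I expect to be the main obstacle is the equivalence with the fourth bullet, which couples the \emph{directed}-path condition to an \emph{undirected} $K_3$ minor. One direction is clean: given two internally disjoint directed paths between $v_1$ and $v_2$, at least one carries an internal vertex $m$ (two length-one paths would coincide in a simple graph), and contracting each path to an edge exhibits the triangle on $\{v_1,m,v_2\}$, so a violation of the second bullet yields a $K_3$ minor. The delicate direction is recovering two directed, internally disjoint, co-terminal paths from an arbitrary undirected $K_3$-minor model, since an undirected cycle in a directed acyclic graph need not by itself split into two co-directed paths sharing both endpoints. Here I would lean on acyclicity: fixing a topological order, I would argue that within a cancellation-free circuit any branch-set structure realizing the triangle must, once traced with the edge orientations, funnel through a common sink as a ``diamond'', which is precisely the configuration the second bullet forbids. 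Making this routing argument precise—and in particular ruling out spurious undirected cycles that carry several local sources and sinks—is the crux, and it is where the genuine directed structure of the circuit, rather than mere undirected connectivity, has to be used.
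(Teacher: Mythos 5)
Your treatment of the first three bullets is correct and, for what it is worth, more than the paper itself supplies: the paper states the proposition with only the remark that it ``comes directly from the definition'' and gives no actual proof. The chain you build --- local disjointness of the two children's supports, support monotonicity by induction along directed paths, the triviality that $\kappa(v)_i=1$ forces a directed path from $x_i$ to $v$, and the standard surgery splitting two distinct co-terminal directed paths at consecutive common vertices into internally disjoint ones --- does establish the equivalence of the first, second and third bullets, under the directed, internally-disjoint reading of the second bullet (which is the only reading under which it can match the first).

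The step you single out as the crux --- recovering two co-terminal, internally disjoint directed paths from an arbitrary undirected $K_3$ minor --- is a genuine gap, and you should not try to close it, because that implication is false. Take inputs $x_1,x_2$ and two gates $u_1=x_1\oplus x_2$ and $u_2=x_1\oplus x_2$: neither gate is an ancestor of the other, so the circuit is cancellation-free and satisfies the third bullet, and no pair of vertices is joined by two internally disjoint directed paths (every directed path is a single edge); yet the underlying undirected graph is a $4$-cycle, which has $K_3$ as a minor. If you object to duplicate gates, take inputs $x_1,\ldots,x_4$ and gates $t_1=x_1\oplus x_2$, $a=x_1\oplus x_3$, $b=x_2\oplus x_4$, $t_2=a\oplus b$: all value vectors are distinct, the circuit is cancellation-free, and $x_1 - t_1 - x_2 - b - t_2 - a - x_1$ is an undirected $6$-cycle. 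Your planned ``funnel to a diamond'' argument breaks exactly here: an undirected cycle in a DAG can have two local sources and two local sinks, and such cycles do occur in cancellation-free circuits, so acyclicity alone cannot force the diamond configuration. What survives is one direction only: if $C$ has no $K_3$ minor, its underlying graph is a forest, which excludes two internally disjoint co-terminal directed paths and hence forces cancellation-freeness. So the fourth bullet is strictly stronger than the other three; the defect lies in the paper's unproved statement, not in your argument, and the honest outcome of your proof is the equivalence of the first three bullets together with the one-way implication from the fourth.
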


The notion cancellation-free was introduced by Boyar
and Peralta in \cite{boyar2010combinational,BoyarMP08}.
The paper concerns straight line program for computing
linear forms, which is equivalent to the model studied in
this paper. 
They proved that the problem of finding shortest linear circuits
for linear operators is NP hard, even when restricted
to cancellation-free circuits.
They also noticed that most heuristics for constructing small linear circuits
never exploit the cancellation property. Then, they
constructed a gate minimizing heuristic that uses cancellation.

\section[Relationship]{Relationship
Between Cancellation-free Linear Circuits and General Linear Circuits}
Boyar and Peralta proved in \cite{boyar2010combinational} that
there exists an infinite family of matrices
where the sizes of cancellation-free circuits computing them
are at least $\frac{3}{2}-o(1)$ times larger than the optimum. 
We call this ratio the \emph{cancellation ratio}, $\rho(n)$.
We can strengthen the lower bound
to $2$ using a surprisingly simple matrix.
This construction is originally due to Svensson \cite{johnny}.

\begin{theorem}
There exists an infinite family of matrices such that any
cancella\-tion-free circuit computing them must have size 
  $2-o(1)$ times larger than the optimum. Thus $\rho(n)\geq 2-o(1)$
\end{theorem}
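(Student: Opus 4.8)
The plan is to exhibit an explicit family $\{A_n\}$ together with two things: a general linear circuit that uses cancellation and computes $A_n$ with roughly $N(n)$ gates, and a proof that \emph{every} cancellation-free circuit for $A_n$ needs at least $(2-o(1))N(n)$ gates. The starting point is the structural consequence of Proposition~\ref{equiprop}: in a cancellation-free circuit the two children $w,t$ of every gate $u$ must have disjoint value vectors, so $|\kappa(u)|=|\kappa(w)|+|\kappa(t)|$, and a gate $v$ can lie on a path feeding an output $y$ only if $\kappa(v)\le\kappa(y)$. In particular, no gate whose support contains a coordinate \emph{absent} from $y$ can ever contribute to $y$. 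This is exactly the feature that cancellation exploits and that I would turn against cancellation-free circuits.

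First I would design the matrix so that the general model has one dominant, heavily reused computation. Concretely, the rows are chosen to be large, pairwise \emph{incomparable} value vectors that all agree with a single ``master'' vector (say the full sum $x_1\oplus\cdots\oplus x_n$) except in a few coordinates. In the general model the master sum is computed once with $n-1$ gates, and each output is then recovered by a constant number of \emph{cancelling} XORs that delete the offending coordinates; summing over the outputs bounds the general optimum by $N(n)=(1+o(1))\cdot(\text{cost of the master sum})$. This pins down the denominator of the ratio.

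The heart of the argument is the cancellation-free lower bound. Because the master vector's support is contained in no single output, it is useless inside a cancellation-free circuit, so by the observation above each output must be assembled purely from disjoint unions of gates supported \emph{within} it. Using the additivity $|\kappa(u)|=|\kappa(w)|+|\kappa(t)|$ together with an accounting argument over the outputs, I would bound the savings available from shared sub-unions and conclude that the total gate count is at least $(2-o(1))N(n)$. Equivalently, one may reinterpret the cancellation-free circuit as a monotone OR-circuit computing the same supports (disjoint union being just union) and invoke a union-circuit lower bound, as foreshadowed in the abstract. Dividing the two bounds gives $\rho(n)\ge 2-o(1)$.

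The main obstacle is precisely this last step: ruling out clever cancellation-free sharing. The weaker $\tfrac32$ bound of Boyar and Peralta is capped exactly because a prefix--suffix decomposition lets a cancellation-free circuit reuse nested sub-sums and so recover the incomparable outputs from a common core. The construction must therefore be engineered so that the output supports admit \emph{no} such cheap common disjoint-union core, i.e.\ they are ``incompressible'' under disjoint union even though they are nearly identical under symmetric difference. Proving this incompressibility, and calibrating the constant so that the ratio tends to exactly $2$ rather than to some smaller value, is where the real combinatorial work lies; the two explicit circuits and the support-additivity bookkeeping are then routine consequences of Proposition~\ref{equiprop}.
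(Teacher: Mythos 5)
There is a genuine gap: the step your whole argument rests on --- the $(2-o(1))$ cancellation-free lower bound for your family of incomparable, near-full rows --- is never carried out. You explicitly defer it (``where the real combinatorial work lies''), and you never even name a concrete matrix, so nothing is proven. Worse, for the natural instances of your template the needed incompressibility is simply false: if each row is the full sum minus one coordinate, prefix/suffix sharing gives a cancellation-free circuit of size about $3n$ against a general optimum of about $2n$, capping the ratio at $3/2$ --- exactly the Boyar--Peralta barrier you acknowledge. You offer no candidate for which sharing is impossible, and the suggestion to ``invoke a union-circuit lower bound'' does not help: Mehlhorn-type bounds (the paper's Theorem~\ref{mehlhornish}) give superlinear cancellation-free lower bounds for dense $K_{h+1,k+1}$-free matrices, but for those matrices no good \emph{upper} bound on the general (with-cancellation) complexity is known, so no ratio can be extracted; that is precisely the open problem in the paper's conclusion. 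Your accounting for the denominator is also off: since your outputs are distinct sums of at least two variables, each needs its own output gate, so with $\Theta(n)$ rows the general optimum is $n+\Omega(n)$, not $(1+o(1))$ times the cost of the master sum, which makes the cancellation-free bound you would need even stronger.

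The paper avoids the sharing problem entirely by making the construction asymmetric rather than symmetric: only \emph{one} output is expensive for cancellation-free circuits. Its matrix has first row $x_2\oplus\cdots\oplus x_n$ and remaining rows the prefix sums $x_1\oplus\cdots\oplus x_j$, which form a chain. With cancellation, the chain costs $n-1$ gates and then $y_1=y_n\oplus x_1$ costs one more, for $n$ total. Cancellation-free, every predecessor of the gate computing $y_1$ has value vector $\le\kappa(y_1)$, hence excludes $x_1$; since every other output contains $x_1$, none of those $\ge n-2$ predecessors can serve as any other output, and each of the other $n-1$ outputs still needs at least one gate of its own, giving $\ge 2n-3$ and ratio $(2n-3)/n\to 2$. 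Note that no incomparability of rows and no incompressibility lemma is needed --- just one ``complement'' row whose missing coordinate appears in all other rows, plus trivial counting. That single-special-output idea is what your proposal is missing.
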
\begin{proof}
Consider the $n\times n$ matrix:
\[
\begin{pmatrix}
  0     & 1 & 1 & 1 & 1 & \ldots & 1 & 1\\
  1     & 1 & 0 & 0 & 0 & \ldots & 0 & 0\\
  1     & 1 & 1 & 0 & 0 & \ldots & 0 & 0\\
  1     & 1 & 1 & 1 & 0 & \ldots & 0 & 0\\
        &   &   &  &\vdots&      &   &  \\
  1     & 1 & 1 & 1 & 1 & \ldots & 1 & 1\\
\end{pmatrix}
\]

If one allows cancellation this matrix can be computed
by a circuit of size $n$, by first computing $x_1\oplus x_2$ to 
obtain $y_2$. For $3\leq j\leq n$, adding $x_j$
to $y_{j-1}$ gives $y_j$. Thus, we use $n-1$ gates to compute
$y_2,\ldots y_n$. After that we can obtain $y_1$ with
one gate since $y_1=y_n\oplus x_1$.

Consider any cancellation-free linear circuit $C$ computing the matrix.
Let the set $S$ contain the gate computing $y_1$ and all its (noninput) predecessors.
Clearly $|S|\geq n-2$ since it is the
sum of $n-1$ terms.

Notice that because $C$ is cancellation-free, none of the gates
in $S$ can compute any of the output values $y_2,\ldots y_n$.
Therefore for every $j>1$ we need at least one gate to compute $y_j$.
Thus one needs $n-1$ extra gates for this part. This adds up to
$2n-3$. And the ratio is therefore $\frac{2n-3}{n}$
proving the theorem.\qed
\end{proof}

It turns out that for almost every matrix, the cancellation ratio is
constant.

%% Natural logarithm
%%
\begin{lemma}
If cancellation is allowed almost every $n\times n$
matrix needs $\frac{n^2}{4\log n}-o(\frac{n^2}{\log n})$ gates to be computed.
\end{lemma}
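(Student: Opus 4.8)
The plan is a Shannon-style counting argument. Instead of reasoning about any particular matrix, I would bound from above the number of distinct $n\times n$ matrices that a linear circuit (now allowing cancellation) with few gates can compute, and then show this count is a vanishing fraction of the $2^{n^2}$ matrices in total. Fix a target size $s$. A linear circuit with at most $s$ gates is completely determined by two pieces of data: for each gate, an (unordered) pair of children chosen among the nodes that precede it --- at most $n+1+s$ of them, namely the $n+1$ inputs and the earlier gates --- and a designation of which $n$ nodes are the outputs $y_1,\dots,y_n$. Counting these choices crudely gives
\[
N(s)\;\le\;(s+1)\,(n+1+s)^{2s}\,(n+1+s)^{n},
\]
and since every circuit computes exactly one matrix, $N(s)$ also bounds the number of matrices computable with at most $s$ gates.

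Next I would take base-$2$ logarithms and isolate the dominant term. Writing $L=\log_2(n+1+s)$, we have $\log_2 N(s)\le 2sL+nL+\log_2(s+1)$. The whole calculation lives in the regime $s=\Theta\!\left(n^2/\log n\right)$, where $s$ swamps $n$, so that $n+1+s=(1+o(1))s$ and therefore $L=2\log_2 n-\log_2(4\log_2 n)+o(1)=(2+o(1))\log_2 n$; the factor $2$ is exactly what forces the constant $\tfrac14$, since each of a gate's two children is one of $\Theta\!\left(n^2/\log n\right)$ nodes and so costs $(2+o(1))\log_2 n$ bits to specify. Substituting $s=\frac{n^2}{4\log_2 n}$, the leading term $2sL$ contributes $4s\log_2 n=n^2$, the output term $nL=O(n\log n)$ is $o(n^2)$, and the correction $-2s\log_2(4\log_2 n)$ hidden in $L$ is a \emph{negative} quantity of order $\Theta\!\left(n^2\log\log n/\log n\right)$. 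Hence $\log_2 N(s)\le n^2-\omega(1)$.

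Therefore $N(s)=o\!\left(2^{n^2}\right)$, so the fraction of matrices computable with at most $s$ gates tends to $0$; equivalently almost every matrix needs strictly more than $s$ gates. Since the favorable correction term already keeps us below $n^2$ at $s=\frac{n^2}{4\log_2 n}$, the same conclusion holds \emph{a fortiori} for $s=\frac{n^2}{4\log n}-o\!\left(\frac{n^2}{\log n}\right)$, which is the bound claimed. The main thing to get right --- and the only place the argument is delicate rather than routine --- is the second step: verifying that $L=(2+o(1))\log_2 n$ with the correct sub-leading behaviour, and confirming that the output-labeling term $nL$ together with all the crude overcounting genuinely stay in $o(n^2)$, so that the constant emerging from $4s\log_2 n=n^2$ is exactly $\tfrac14$ and is not eroded.
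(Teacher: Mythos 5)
Your proposal is correct and takes essentially the same approach as the paper: a Shannon-style count of circuits with $s$ gates, where each gate's two children are drawn from $\Theta\left(n^2/\log n\right)$ nodes and hence cost about $4\log_2 n$ bits, compared against the $2^{n^2}$ matrices — which is exactly where the constant $\tfrac14$ comes from in both arguments. The only differences are bookkeeping: the paper divides its circuit count by $M!$ and introduces slack by setting $M=(1-\epsilon)\frac{n^2}{4\log n}$, whereas you keep the raw count and absorb the output-designation term and all overcounting into the negative correction $-2s\log_2(4\log_2 n)=-\Theta\left(n^2\log\log n/\log n\right)$; both ways of handling the lower-order terms are valid.
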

\begin{proof}
The number of $n\times n$ matrices is $2^{n^2}$.
Since there are two inputs to each of the $M$ gates, 
and each of the $n$ outputs are either the output from a gate or
an input (or zero), the
number of circuits with $n$ inputs, $n$ outputs and $M$ gates is at
most 
\[
(n+M)^{2M}(n+M+1)^n/M!
\]
Taking the logarithm one gets
\[
  {2M}\log (n+M)+ n\log(n+M+1) -\log(M!)
\]
Recalling that $\log(M!)=M\log M - O(M)$, for sufficiently
large $n$, $n< M$:
\[
 {2M}\log (2M)+ M\log(2M) - M\log M + O(M)=2M\log M+O(M)
\]
so the number of distinct circuits is at most $2^{2M\log M+O(M)}$.
For $0<\epsilon<1$ the number of matrices that can be computed
with $M=(1-\epsilon)\frac{1}{4}n^2/\log n$ gates is at most
\[
2^{2M\log M+O(M)}\leq 2^{(1-\epsilon)n^2+o(n^2)}.
\]
That is, the fraction of matrices
not computable, is at least
\[
1-\frac{2^{(1-\epsilon)n^2+o(n^2)}}{2^{n^2}}.
\]
Since this limit tends to $1$
almost every matrix has circuit
size at least $\frac{n^2}{4\log n}-o\left(\frac{n^2}{4\log n}\right)$.
\qed
\end{proof}

We will now show that the construction in the proof of 
Theorem~\ref{lupanovthm}
produces a circuit that is cancellation-free.
Before stating the lemma and its proof we will need a definition of
rectangular decompositions:
Given a Boolean $n\times n$ matrix $A$, the Boolean 
matrices $B_1,\ldots,B_k$
constitute a rectangular decomposition if
$A=B_1+B_2+\ldots + B_k$ where addition is \emph{over the reals}
and every $B_i$ has rank $1$. We say that the \emph{weight} of $B_i$ is
the number of nonzero columns plus the number of nonzero rows.
The weight of a rectangular decomposition is the sum of the
weights of the $B_i$'s.
Lupanov showed in \cite{Lupanov1965} (see also \cite{juknabook})
that every $n\times n$ matrix admits a rectangular decomposition
of weight $(1+o(1))\frac{n^2}{\log n}$.
\begin{lemma}
  Every $n\times n$ matrix can be computed by a cancellation-free
linear circuit of size $(1+o(1))\frac{n^2}{\log n}$.
\end{lemma}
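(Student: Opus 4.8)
The plan is to realize Lupanov's rectangular decomposition directly as a linear circuit and to argue that the disjointness forced by working \emph{over the reals} is precisely what makes the resulting circuit cancellation-free. I would write the guaranteed decomposition as $A = B_1 + \cdots + B_k$ over the reals, where each rank-$1$ Boolean matrix is an outer product $B_i = \mathbf{u}_i\mathbf{v}_i^{T}$ of $0/1$ vectors; its support is the combinatorial rectangle $R_i = \mathrm{supp}(\mathbf{u}_i)\times\mathrm{supp}(\mathbf{v}_i)$, and its weight is $|\mathrm{supp}(\mathbf{u}_i)|+|\mathrm{supp}(\mathbf{v}_i)|$. Because the sum is over the reals and $A$ is $0/1$, for every entry at most one $B_i$ can be $1$ there; equivalently, the rectangles $R_i$ partition the $1$-entries of $A$. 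I would exploit this partition property throughout.

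The circuit is built in two layers. First, for each $i$ I compute the parity $P_i = \bigoplus_{c\in\mathrm{supp}(\mathbf{v}_i)} x_c$ by an arbitrary binary XOR-tree over the distinct variables indexed by $\mathrm{supp}(\mathbf{v}_i)$, using $|\mathrm{supp}(\mathbf{v}_i)|-1$ gates. Second, since the partition yields, for each output row $r$, the identity $\bigoplus_{c:A_{r,c}=1} x_c = \bigoplus_{i:(u_i)_r=1} P_i$, I form output $y_r$ by an XOR-tree that combines the (shared) wires $P_i$ over $\{i:(u_i)_r=1\}$, using one fewer gate than the number of terms. Summing, the gate count is $\sum_i(|\mathrm{supp}(\mathbf{v}_i)|-1)+\sum_r(d_r-1)$ where $d_r=|\{i:(u_i)_r=1\}|$; using $\sum_r d_r=\sum_i|\mathrm{supp}(\mathbf{u}_i)|$, this equals $\mathrm{weight}-k-n$, which is at most $(1+o(1))\frac{n^2}{\log n}$ by Lupanov's bound on the weight of the decomposition.

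The main obstacle, and the heart of the lemma, is verifying that this circuit is cancellation-free; this is exactly where the real-number decomposition is essential. I would check the condition $\kappa(u)\ge\kappa(w)$ for every ancestor pair $u,w$. Inside a single parity tree for $P_i$ this is immediate, since every gate computes the parity of a subset of the distinct variables $\mathrm{supp}(\mathbf{v}_i)$ and ancestors compute larger subsets. The delicate case is the output layer: if two distinct parities $P_i,P_{i'}$ both feed the same output $y_r$, i.e.\ $(u_i)_r=(u_{i'})_r=1$, then no column $c$ can lie in both $\mathrm{supp}(\mathbf{v}_i)$ and $\mathrm{supp}(\mathbf{v}_{i'})$, for otherwise the entry $(r,c)$ would be covered by both $R_i$ and $R_{i'}$, contradicting that the rectangles partition the $1$-entries. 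Hence the parities combined at any output have pairwise disjoint variable supports, so every gate in that output tree has value vector equal to a disjoint union of the relevant $\mathrm{supp}(\mathbf{v}_i)$'s; no variable ever cancels, and $\kappa$ is monotone along every path. (Equivalently, each variable $x_c$ reaches a given output along a unique path, so no two disjoint paths exist, matching Proposition~\ref{equiprop}.) This establishes cancellation-freeness and, together with the gate count above, completes the argument.
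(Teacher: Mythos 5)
Your proposal is correct and follows essentially the same route as the paper: realize Lupanov's rectangular decomposition as a two-layer XOR circuit (one parity per rectangle's column support, then per-row combinations) and deduce cancellation-freeness from the fact that the decomposition sums to $A$ over the reals. The only difference is one of detail: the paper merely asserts that real-valued addition forces cancellation-freeness, whereas you spell out the underlying disjointness argument (rectangles partition the $1$-entries, hence parities feeding a common output have disjoint supports), which is a welcome elaboration rather than a different approach.
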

\begin{proof}
Let the Boolean $n\times n$ matrix $A$ be arbitrary. Consider the rectangular
decomposition $B_1,\ldots,B_k$ assumed to exist by Lupanov's theorem.
For each $i$ let $c_i$ ($r_i$) denote the number of nonzero
columns (rows) in $B_i$.
Add for each $B_i$ the inputs corresponding to the nonzero columns, using
$c_i-1$ gates.
Call the result $s_i$.
Now each output
is a sum of $s_i$'s. For each $y_j$, add these $s_i$'s. In total, this 
takes at most $\sum_i r_i$ gates.
The total number of gates is at most
\[
\sum_i (c_i-1) + \sum_j r_j \leq  (1+o(1))\frac{n^2}{\log n} 
\]
Since the addition $B_1+\ldots +B_k$ in the
the rectangular decomposition is over the reals, the circuits is
cancellation-free.
\qed
\end{proof}
Combining the two lemmas we get the following:
\begin{theorem}
  For almost every matrix, the cancellation ratio, $\rho (n)$, is constant.
\end{theorem}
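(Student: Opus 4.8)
The plan is to combine the two preceding lemmas directly, reading the cancellation ratio $\rho(n)$ of a fixed matrix $A$ as the quotient of the size of the smallest cancellation-free linear circuit computing $A$ and the size of the smallest unrestricted linear circuit computing $A$. I would bound this quotient from above for almost every $A$ by controlling the numerator from above uniformly and the denominator from below on all but a vanishing fraction of matrices. The key observation that makes this work is an asymmetry in the scope of the two lemmas: the upper bound on cancellation-free circuits holds for \emph{every} matrix, whereas the lower bound on unrestricted circuits holds only for \emph{almost every} matrix; since the latter set is contained in the former, both estimates apply simultaneously to the almost-every family.

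First I would invoke the second lemma to produce, for every $n\times n$ matrix, a cancellation-free circuit of size at most $(1+o(1))\frac{n^2}{\log n}$. This is a universal upper bound on the numerator, so in particular it holds for whichever subfamily I restrict attention to. Next I would invoke the first lemma, which guarantees that almost every matrix requires at least $\frac{n^2}{4\log n}-o\!\left(\frac{n^2}{\log n}\right)$ gates even when cancellation is permitted; this is the lower bound on the denominator, valid outside a set of matrices whose relative size tends to $0$.

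For any matrix in the almost-every family both bounds hold, so the cancellation ratio satisfies
\[
\rho(n)\;\le\;\frac{(1+o(1))\frac{n^2}{\log n}}{\frac{n^2}{4\log n}-o\!\left(\frac{n^2}{\log n}\right)}\;=\;\frac{1+o(1)}{\frac14-o(1)}\;\longrightarrow\;4.
\]
Hence $\rho(n)\le 4+o(1)$ for all but a vanishing fraction of matrices, which is the claimed constant bound. The argument needs essentially no new ideas: the one point meriting care is confirming that the two estimates refer to a common collection of matrices, and because the cancellation-free upper bound is universal this is automatic. I therefore expect no genuine obstacle here; the theorem is a bookkeeping consequence of the two lemmas, and the only substantive content was already established in proving them.
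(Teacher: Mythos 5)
Your proof is correct and matches the paper's approach exactly: the paper itself derives this theorem simply by ``combining the two lemmas,'' i.e.\ dividing the universal $(1+o(1))\frac{n^2}{\log n}$ cancellation-free upper bound by the $\frac{n^2}{4\log n}-o\bigl(\frac{n^2}{\log n}\bigr)$ lower bound that holds for almost every matrix, yielding a ratio of at most $4+o(1)$. Your observation that the universal upper bound automatically applies to the almost-every family is precisely the bookkeeping the paper leaves implicit.
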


\section[Sierpinski]{
Lower Bound on the Size of Cancellation-free Circuits
Computing the Sierpinski Gasket Matrix.
}
In this section we will prove that the $n\times n$ Sierpinski gasket
matrix needs $\frac{1}{2}n\log n$ gates when computed by a linear
cancellation-free circuit, and that this suffices.  

Suppose some subset of the input variables are restricted to the value $0$.
Now look at the resulting circuit.
Some of the gates will now compute the value $z=0\oplus w$.
In this case, we say that the gate is eliminated
since it no longer does any computation. The situation can be even more
extreme, some gate might ``compute'' $z=0\oplus 0$.
In both cases, we can remove the gate from the circuit, and
forward the input if necessary (if $z$ is an output
gate, $w$ now outputs the result). In the second
case, the parent of $z$ will get eliminated, so the
effect might cascade.
For any subset of the variables,
there is a unique set of gates that become eliminated when setting
these variables to $0$.

The Sierpinski gasket matrix is defined recursively as:
\[
S_0=
\begin{pmatrix}
  1
\end{pmatrix}
\]

\[
S_{k+1}=
\begin{pmatrix}
  S_{k} & 0\\
  S_{k} & S_{k}\\
\end{pmatrix}
\]
In all of the following let $n=2^k$, and let $S_k$ be the
$n\times n$ Sierpinski gasket matrix.
First we need a fact about $S_k$:

\begin{proposition}
For every $k$ the determinant of the Sierpinski gasket matrix is
$1$.
In particular the $2^k$ rows in $S_k$ 
are linearly independent.
%Moreover it is its own inverse, modulo 2.
\end{proposition}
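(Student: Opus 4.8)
The plan is to prove the statement by induction on $k$, exploiting the block structure of the recursive definition. The key observation is that $S_{k+1}$ is \emph{block lower-triangular}: it has the form
\[
S_{k+1}=\begin{pmatrix} S_k & 0\\ S_k & S_k\end{pmatrix},
\]
with the two diagonal blocks both equal to $S_k$ and a zero block in the upper-right corner. For any matrix of the form $\begin{pmatrix} A & 0\\ C & D\end{pmatrix}$ over a commutative ring one has $\det\begin{pmatrix} A & 0\\ C & D\end{pmatrix}=\det(A)\det(D)$, and this identity holds over $\gf$ just as it does over the integers. Applying it here yields the recurrence
\[
\det(S_{k+1})=\det(S_k)\cdot\det(S_k)=\det(S_k)^2 .
\]

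With this recurrence the induction is immediate. For the base case, $\det(S_0)=\det\begin{pmatrix}1\end{pmatrix}=1$. For the inductive step, assuming $\det(S_k)=1$ we get $\det(S_{k+1})=1^2=1$. Hence $\det(S_k)=1$ for every $k$.

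It remains to deduce linear independence of the rows \emph{over} $\gf$, which is what is actually needed later. Since $\det(S_k)=1$ and $1\neq 0$ in $\gf$, the matrix $S_k$ is invertible over $\gf$, so its $2^k$ rows are linearly independent over $\gf$. Equivalently, one may simply carry out the entire induction above with all arithmetic performed in $\gf$; the block-triangular determinant formula and the computation $1^2=1$ are unchanged, so no separate argument is required.

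There is essentially no serious obstacle in this proof: the recursion reduces everything to a one-line determinant identity plus a trivial base case. The only point demanding a little care is to confirm that the relevant conclusion is the nonvanishing of $\det(S_k)$ in $\gf$ (not merely in $\mathbb{Z}$), which is why I would either phrase the final step as ``$1\neq 0$ in $\gf$'' or, more cleanly, perform the induction over $\gf$ from the start.
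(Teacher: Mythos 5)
Your proof is correct and follows essentially the same route as the paper's: the block lower-triangular determinant identity applied to the recursive structure of $S_{k+1}$, giving $\det(S_{k+1})=\det(S_k)^2=1$ by induction. Your additional remark that the determinant should be evaluated (or at least interpreted) over $\gf$ so that $1\neq 0$ yields invertibility is a sensible clarification the paper glosses over, but it does not change the argument.
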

\begin{proof}
The determinant of an augmented matrix is given by
the formula:
\[
det(S_{k+1})=det\begin{pmatrix}S_{k} & 0\\ S_{k} & S_k\end{pmatrix}
= det(S_k)det(S_k)=1
\]
\qed
% Own inverse part (maybe not strictly relevant, but I liked it
% when I found out):
% \begin{eqnarray*}
% S_{k+1}^{-1} &=& 
% \begin{pmatrix}
%   S_k & 0\\
%   S_k & S_k
% \end{pmatrix}^{-1}\\
% &=&
% \begin{pmatrix}
%   S_k^{-1}+S_k^{-1}0(S_k-S_kS_k^{-1}0)^{-1}S_kS_k^{-1} &
%   -S_k^{-1}0(S_k-S_kS_k^{-1}0)^{-1} \\
%   -(S_k-S_kS_k^{-1}0)^{-1}S_kS_k^{-1}&
%   (S_k-S_kS_k^{-1}0)^{-1}
% \end{pmatrix}\\
% &=&
% \begin{pmatrix}
%   S_k^{-1} & 0\\
%   -S_k^{-1} & S_k^{-1}
% \end{pmatrix}\\
% &=&
% \begin{pmatrix}
%   S_k & 0\\
%   S_k & S_k
% \end{pmatrix}
% =S_{k+1}
% \end{eqnarray*}
\end{proof}

\begin{theorem}
\label{sierpinskilower}
For every $k\geq 2$, any cancellation-free circuit that computes 
the $n\times n$ Sierpinski gasket matrix has size at least
$\frac{1}{2}n\log_2 n$.
\end{theorem}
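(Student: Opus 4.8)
The plan is to prove the sharper statement by induction on $k$: writing $f(k)$ for the minimum size of a cancellation-free circuit computing $S_k$, I will show $f(k)\geq \frac12 2^k k$, which is the theorem since $n=2^k$. Because the bound is claimed to be tight, I expect it to be the exact solution of a clean recurrence, and indeed $\frac12 2^k k$ is precisely the solution of $f(k+1)\geq 2f(k)+2^k$ with $f(0)=0$. So the whole argument reduces to establishing this single recurrence; the induction itself, together with the trivial base case $f(0)=0$, is then routine (and in particular covers all $k\geq 2$).

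To prove $f(k+1)\geq 2f(k)+2^k$, I fix a cancellation-free circuit $C$ for $S_{k+1}$ and split the $2^{k+1}$ inputs into a first half $x^{(1)}$ and a second half $x^{(2)}$, each of size $2^k$. By the recursive definition of $S_{k+1}$, the top $2^k$ outputs compute $S_k x^{(1)}$ and the bottom $2^k$ outputs compute $S_k x^{(1)}\oplus S_k x^{(2)}$. For each gate $u$ I consider the projection $\kappa^{(2)}(u)$ of its value vector onto the second-half coordinates, and classify $u$ as a \emph{first-half} gate if $\kappa^{(2)}(u)=0$, a \emph{merge} gate if both children have nonzero second-half projection, and a \emph{transition} gate otherwise. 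Using the cancellation-free hypothesis, namely that a child's value vector lies coordinatewise below its parent's, one checks this really is a partition of all gates (e.g. $\kappa^{(2)}(u)=0$ forces both children to have zero second-half projection).

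The three classes will supply the three terms of the recurrence. First, the gates computing the top outputs $S_k x^{(1)}$ together with all their noninput predecessors have value vectors supported in the first half, since cancellation-freeness forces every predecessor below the output; hence they are all first-half gates and they form a cancellation-free circuit for $S_k$, giving at least $f(k)$ first-half gates. Second, restricting $x^{(1)}=0$ and removing the eliminated gates leaves exactly the merge gates (a gate survives iff both children still carry second-half content), and these form a cancellation-free circuit computing $S_k x^{(2)}$, so there are at least $f(k)$ of them; the first-half gates are all eliminated by this restriction, and in any case the two families are disjoint by the partition. It then remains to show there are at least $2^k$ transition gates, and this is the step I expect to be the crux.

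For the transition count I track how first-half content enters the second-half part of the circuit. Restrict attention to the gates with $\kappa^{(2)}\neq 0$ (the merge and transition gates) and the inputs $x^{(2)}$, and follow the first-half projection $\kappa^{(1)}(\cdot)$ along this subgraph: the inputs $x^{(2)}$ have $\kappa^{(1)}=0$, at a merge gate $\kappa^{(1)}$ is the XOR of its two children's projections, and at a transition gate it is the XOR of its surviving child's projection with the full value vector $\kappa(w)$ of the first-half child $w$. By induction along the circuit, the first-half projection of every such gate therefore lies in the $\gf$-span of the injected vectors $\{\kappa(w): w \text{ the first-half child of a transition gate}\}$. In particular the first-half projections of the $2^k$ bottom outputs, which are exactly the rows of $S_k$ and are linearly independent by the preceding proposition, lie in this span; a $2^k$-dimensional space needs at least $2^k$ spanning vectors, and each transition gate contributes one injected vector, so there are at least $2^k$ transition gates. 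Summing the three bounds yields $|C|\geq 2f(k)+2^k$ and closes the induction. The main obstacle is exactly this last linear-algebraic step: ensuring the partition is clean, that the restriction really preserves cancellation-freeness so the induction hypothesis applies to the surviving subcircuit, and that the statement ``first-half content enters only through transition gates'' is phrased so as to be robust to the cascading of eliminations.
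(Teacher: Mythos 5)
Your proof is correct and takes essentially the same approach as the paper's: your first-half/merge/transition partition coincides (for cancellation-free circuits, by Proposition~\ref{equiprop}) with the paper's partition into $C_1$ (gates unreachable from $x_{n+1},\ldots,x_{2n}$), $C_2$ (gates surviving the restriction $x_1=\cdots=x_n=0$), and $C_3$ (eliminated gates); the induction hypothesis is applied to the first two classes exactly as in the paper, and your spanning argument counting injected vectors is the paper's argument that $|C_3|\geq|\delta(C_1)|\geq n$ via the linear independence of the rows of $S_k$. The only differences are cosmetic: value-vector phrasing instead of elimination, and a recurrence with base case $k=0$ instead of $k=1$.
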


\begin{proof}
The proof is by induction on $k$. For the base case, look at
the $2\times 2$ matrix $S_1$.
This clearly needs at least
$\frac{1}{2}2\log_2 2=1$ gate.

Suppose the statement is true for some $k$, now
look at the $2n\times 2n$  matrix $S_{k+1}$. 
Denote the output gates $y_1,\ldots,y_{2n}$
and the inputs $x_1,\ldots,x_{2n}$.
Partition the gates of $C$ into three disjoint sets,
$C_1,C_2$ and $C_3$ defined as follows:
\begin{itemize}
\item $C_1$: The gates having only inputs from $x_1,\ldots,x_n$ and
$C_1$.
Equivalently the gates not reachable from inputs $x_{n+1},\ldots,x_{2n}$.
\item $C_2$: The gates in $C-C_1$ that are not eliminated
when inputs $x_1,\ldots,x_n$ are set to $0$.
\item $C_3$: $C-(C_1\cup C_2)$. That is, the gates in $C-C_1$ that 
do become eliminated when inputs $x_1,\ldots,x_n$ is set to $0$.
\end{itemize}
Obviously $|C|=|C_1|+|C_2|+|C_3|$.
We will now give lower bounds on the sizes of $C_1$, $C_2$, and $C_3$.

\paragraph{$C_1$:}
% The circuit is cancellation free. That is, there is no
% path from x_{n+i} to $y_j$ so regardless of the values of
% x_{n+i} ...
Since the circuit is cancellation-free, the
outputs $y_1,\ldots,y_n$ and all their predecessors are in $C_1$.
By the induction hypothesis, $|C_1|\geq \frac{1}{2}n\log_2 n$.

\paragraph{$C_2$:}
Since the gates in $C_2$ 
are note eliminated when $x_1=x_2=\ldots =x_n$, they
compute $S_k$ on the
inputs $x_{n+1},\ldots,x_{2n}$. By the induction hypothesis
 $|C_2|\geq \frac{1}{2}n\log_2 n$.

\paragraph{$C_3$:} The goal is to prove that this set has size
at least $n$. 
Let $\delta(C_1)$ be the set of arcs
from $C_1\cup \{x_1,\ldots,x_n\}$ to $B=C_2\cup C_3$. We first prove that
\begin{equation}
  \label{eq:cutatleastc3}
|C_3|\geq |\delta(C_1)|  
\end{equation}
By definition, all gates in $C_1$ attain the value $0$ when
$x_1,\ldots,x_n$ are set to $0$.
Let $(v,w)\in \delta(C_1)$ be arbitrary.
Since $v\in C_1\cup \{x_1,\ldots , x_n\}$,
$w$ becomes eliminated, so $w\in C_3$. Every $u\in C_3$
can only have one child in $C_1$, since
no gate in $C_3$ can have two children in $C_1$.
So $|C_3|\geq |\delta(C_1)|$.

We now show that $|\delta(C_1)|\geq n$.
Let the endpoints of $\delta(C_1)$ in $C_1$ be $e_1,\ldots , e_p$
and let their corresponding value vectors be $v_1,\ldots , v_p$.

% Suppose that $|\delta(C_1)|=p\leq n-1$, and label
% the endpoints of arcs in $\delta(C_1)$ in $C_1$
% with  $v_1,\ldots,v_p$.

Now look at the value vectors of
the output gates $y_{n+1},\ldots,y_{2n}$.
For each of these, the first vector consisting of the
first $n$ coordinates must be in $span(v_1,\ldots ,v_p)$, but
the dimension of $S_k$ must is $n$, so $p\geq n$.

% clearly $dim(y_{n+1},\ldots,y_{2n})=n$ while

% If we, for each of these vectors, look at the first $n$ coordinates,
% this is a linear combination 
% of the first $n$ coordinates of $v_1,\ldots,v_p$.
% Obviously the dimension of $span\{v_1,\ldots,v_p\}$
% is \textit{at most} $n-1$, but the rows in
% the $S_k$ are linearly independent and thus have dimension $n$.
% Thus, at least one output is wrong on at least one of the
% first $n$ coordinates.
We have that $|C_3|\geq |\delta(C_1)|\geq n$, so
\[
|C|=|C_1|+|C_2|+|C_3| \geq
\frac{1}{2}n\log_2 n + \frac{1}{2}n\log_2 n
+ n = \frac{1}{2}(2n)\log_2(2n).
\]
\qed
\end{proof}

It turns out that this is tight.
\begin{proposition}
  The Sierpinski matrix can be computed
by a cancellation-free circuit
using $\frac{1}{2}n\log_2 n$ gates.
\label{constructiveSierpinskiUpperBound}
\end{proposition}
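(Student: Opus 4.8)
The plan is to build the circuit recursively, following the recursive definition of the matrix. Split the $2n$ inputs into their first and second halves $\mathbf{x}^{(1)}=(x_1,\dots,x_n)$ and $\mathbf{x}^{(2)}=(x_{n+1},\dots,x_{2n})$, where $n=2^k$. The block structure
\[
S_{k+1}\mathbf{x}
=\begin{pmatrix} S_k & 0\\ S_k & S_k\end{pmatrix}
 \begin{pmatrix}\mathbf{x}^{(1)}\\ \mathbf{x}^{(2)}\end{pmatrix}
=\begin{pmatrix} S_k\mathbf{x}^{(1)}\\ S_k\mathbf{x}^{(1)}\oplus S_k\mathbf{x}^{(2)}\end{pmatrix}
\]
dictates the construction. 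First I would build, recursively and independently, two copies of the optimal $S_k$-circuit: one on $\mathbf{x}^{(1)}$ producing $\mathbf{a}=S_k\mathbf{x}^{(1)}$, and one on $\mathbf{x}^{(2)}$ producing $\mathbf{b}=S_k\mathbf{x}^{(2)}$. The top $n$ outputs $y_1,\dots,y_n$ are then exactly the entries of $\mathbf{a}$, and the bottom $n$ outputs are obtained with $n$ further XOR gates computing $y_{n+i}=a_i\oplus b_i$; here the gates producing $\mathbf{a}$ serve simultaneously as the top outputs and as intermediate results feeding these additions.

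Next I would count. Writing $g(k)$ for the size of this circuit on $S_k$, the construction yields $g(k+1)=2g(k)+2^k$ with $g(0)=0$, since the $1\times 1$ matrix needs no gates. Dividing by $2^{k+1}$ gives $g(k+1)/2^{k+1}=g(k)/2^k+\tfrac12$, hence $g(k)/2^k=k/2$ and $g(k)=\tfrac12\,2^k k=\tfrac12 n\log_2 n$, which matches the lower bound of Theorem~\ref{sierpinskilower} exactly, so the construction is tight.

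The one genuinely nontrivial point, and the step I expect to be the main obstacle, is verifying that the circuit is cancellation-free; I would do this by induction using the characterization in Proposition~\ref{equiprop} that a linear circuit is cancellation-free if and only if for every gate $v$ the support of $\kappa(v)$ equals the set of inputs from which $v$ is reachable. By the inductive hypothesis the two sub-circuits are cancellation-free on their respective halves, and since they share no gates and read disjoint input sets, no cancellation can arise within them. The only place value vectors from the two halves combine is at the final additions, where $\kappa(y_{n+i})=\kappa(a_i)\oplus\kappa(b_i)$. The decisive observation is that $\operatorname{supp}(\kappa(a_i))\subseteq\{1,\dots,n\}$ while $\operatorname{supp}(\kappa(b_i))\subseteq\{n+1,\dots,2n\}$, so the two supports are disjoint; the XOR is therefore a disjoint union with no coordinate cancelling, and $\operatorname{supp}(\kappa(y_{n+i}))$ is precisely the set of inputs reaching $y_{n+i}$. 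This re-establishes the cancellation-free property for $S_{k+1}$ and closes the induction. Thus the remaining work is essentially bookkeeping of supports, the disjointness of the two halves being exactly the structural feature that rules out cancellation.
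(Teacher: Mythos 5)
Your construction is exactly the paper's: the same divide-and-conquer recursion following the block structure $S_{k+1}=\left(\begin{smallmatrix} S_k & 0\\ S_k & S_k\end{smallmatrix}\right)$, with two recursive copies on disjoint halves of the inputs plus $n$ final XOR gates, and the same count $\frac{1}{2}(2n)\log_2(2n)$. Your explicit solution of the recurrence and your verification of the cancellation-free property (via disjoint supports) are details the paper leaves implicit, but the approach is the same and correct.
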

\begin{proof}
This is clearly true for $S_2$. Assume that $S_k$ can be computed using
$\frac{1}{2}n\log_2 n$ gates. Consider the matrix $S_{k+1}$. Construct the circuit
in a divide and conquer manner; construct recursively on variables $x_1,\ldots,x_n$
and $x_{n+1},\ldots, x_{2n}$. This gives outputs $y_1,\ldots,y_n$.
After this use $n$ operations to finish the outputs $y_{n+1},\ldots y_{2n}$.
This adds up to exactly $\frac{1}{2}(2n)\log_2 2n$. \qed
\end{proof}

\section{Stronger Lower Bounds}

In \cite{mehlhorn1979some}, Mehlhorn proved lower
bounds on monotone circuits
for computing
``Boolean sums''.
The same proof strategy can be used to prove lower bounds on
cancellation-free linear circuits.
For a matrix $A$, denote by $c_f(A)$ the smallest cancellation-free
linear circuit that
computes $A$, and $|A|$ as the number of $1$'s in $A$.
Let $K_{a,b}$ be the complete bipartite graph with $a$ vertices
in one vertex set and $b$ in the other.

\begin{theorem}
\label{mehlhornish}
  Let $M$ be an $n\times n$ matrix. Interpret $M$ as a
  vertex adjacency matrix for a bipartite graph in the
  natural way. If this graph does not contain 
$K_{h+1,k+1}$ for constants
$h,k$ 
  then $|c_f(M)|\in \Omega(|M|)$.  
\end{theorem}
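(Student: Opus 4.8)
The plan is to mimic Mehlhorn's monotone argument, exploiting the one structural feature of cancellation-free circuits we really need: along any directed path toward an output the value vector can only grow (this is the defining inequality $\kappa(u)\ge\kappa(w)$, equivalently the third item of Proposition~\ref{equiprop}). In particular, if a gate $g$ lies in the \emph{cone} of output $y_i$ (the set of gates from which $y_i$ is reachable), then $\kappa(g)\le\kappa(y_i)$, so the support of $g$ is contained in $S_i=\{j:M_{ij}=1\}$, the set of inputs that $y_i$ sums. Call $g$ \emph{large} if $|\kappa(g)|\ge k+1$. The first observation is that a large gate can feed at most $h$ outputs: if $g$ reached outputs $y_{i_1},\dots,y_{i_{h+1}}$, then every $S_{i_t}$ would contain the coordinates of $\kappa(g)$, and choosing any $k+1$ of them would exhibit a $K_{h+1,k+1}$ (the $h+1$ outputs against $k+1$ inputs) in the bipartite graph, contradicting the hypothesis.

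Next I would count incidences between large gates and outputs. Let $I=\sum_{g\text{ large}} s_g$, where $s_g$ is the number of outputs reachable from $g$; equivalently $I=\sum_i\ell_i$, where $\ell_i$ is the number of large gates in the cone of $y_i$. The previous paragraph gives the clean upper bound $I\le h\cdot|c_f(M)|$. The heart of the proof is a matching per-output lower bound,
\[
\ell_i\ \ge\ \frac{d_i}{2k},\qquad d_i=|S_i|,
\]
valid whenever $d_i>k$. To see this, fix $j\in S_i$ and follow any path from $x_j$ to $y_i$; since the support is $1$ at $x_j$, exceeds $k$ at $y_i$, and is monotone along the path, there is a first gate $g^\star$ on the path whose support exceeds $k$, and the child of $g^\star$ on the path has support at most $k$ and already contains $j$. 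Thus every $j\in S_i$ is ``absorbed'' into some large cone-gate through a child of support at most $k$; as each gate has only two children, a single large gate absorbs at most $2k$ inputs this way, and summing over $j\in S_i$ forces $\ell_i\ge d_i/(2k)$.

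Combining the two bounds gives $h\,|c_f(M)|\ge I\ge\frac{1}{2k}\sum_{i:d_i>k} d_i\ge\frac{1}{2k}\bigl(|M|-kn\bigr)$, i.e. $|c_f(M)|\ge\frac{|M|}{2hk}-\frac{n}{2h}$. For matrices whose number of ones is superlinear in $n$ — the regime relevant to the $\Omega(n^{2-\epsilon})$ consequence obtained from Andreev's graphs, where $|M|=\Theta(n^{2-\epsilon})\gg n$ — the additive $n$ term is negligible and we conclude $|c_f(M)|=\Omega(|M|)$ with constant $\tfrac1{4hk}$. The main obstacle I anticipate is exactly the per-output bound $\ell_i=\Omega(d_i)$: because the circuit is a DAG rather than a tree, an input may reach $y_i$ along many paths and gates may be reused within a single cone, so one cannot naively split the cone into a tree of $d_i-1$ gates. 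The first-large-gate charging above is designed to be robust to this, since it only ever charges an input to the two children-supports of large gates; the care needed is to guarantee each $j\in S_i$ is charged at least once while each large gate is charged at most $2k$ times. (One should note that the purely additive $-\tfrac{n}{2h}$ term cannot be removed in general — a permutation matrix is $K_{h+1,k+1}$-free with $|M|=n$ yet needs no gates — so the statement is genuinely a bound of the form $\tfrac{|M|}{2hk}-O(n)$, which is $\Omega(|M|)$ precisely for the dense families of interest.)
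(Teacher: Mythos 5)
Your proposal is correct, and its skeleton is the paper's skeleton: both are Mehlhorn-style double counts resting on the identical key lemma that a gate whose value vector has support at least $k+1$ can reach at most $h$ outputs, since cancellation-freeness propagates that support to every output it feeds and would otherwise exhibit a $K_{h+1,k+1}$. Where you genuinely diverge is in how the per-output count is obtained. The paper first passes to an auxiliary model in which every sum of at most $k$ variables is available for free; in that model one may assume every gate computes a sum of at least $k+1$ variables, and the per-row cost $|M_i|/k-1$ follows from counting the sources of the cone of $y_i$ (a cone with $g_i$ binary gates has at most $g_i+1$ sources, each contributing at most $k$ variables). You instead stay inside the original circuit and charge each $j\in S_i$ to the first ``large'' gate on a path from $x_j$ to $y_i$; this costs you a factor of $2$ (each large gate absorbs at most $2k$ inputs through its small children) but buys a self-contained argument that avoids the model change and the implicit ``WLOG every gate is large'' normalization that the paper leaves unstated. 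The resulting bounds agree up to constants: the paper's is $\frac{|M|/k-n}{h}$, yours is $\frac{|M|-kn}{2kh}$. Finally, your explicit caveat about the additive $-\Theta(n)$ term is a point where you are more careful than the paper: that term is present in the paper's own bound as well, and your permutation-matrix example shows it cannot be removed, so the stated conclusion $|c_f(M)|\in\Omega(|M|)$ tacitly assumes $|M|$ dominates $kn$ --- harmless for the intended application to Andreev's graphs with $|M|=\Theta(n^{2-\epsilon})$ and constant $h,k$, but a genuine implicit hypothesis in both proofs.
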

\begin{proof}
  Consider the class of cancellation-free linear circuits where all sums
  of at most $k$ variables are available for free.
  Let $\tilde{c_f}(M)$ be smallest of such circuits computing $M$.
  Obviously $|c_f(M)|\geq |\tilde{c_f}(M)|$.
Since all sums of at most $k$
variables are
  available for free, anything computed at a gate in
  $\tilde{c_f}(M)$ is a sum of at least $k+1$
  variables.
  Since the circuit is cancellation-free,
  for a gate $u$ in $\tilde{c_f}(M)$, its
  value vector will never decrease, hence the value vector of a
  successor to $u$ will
  have $1$ on the $k+1$ coordinates that $u$'s value vector has.
  In particular, since the matrix does not contain $K_{h+1,k+1}$, this
  means that any gate $u$ in $\tilde{c_f}(M)$ can have a path
  to at most $h$ outputs.

  For a fixed row $i$, the cost of computing it
  is at least
  \[
    |M_i|/k-1.
  \]
  And since a gate has a path to at most $h$ outputs,
  if we sum over all rows we count
  each gate at most $h$ times. So the total size of
  $\tilde{c_f}(M)$ is at least
  \[
  \sum_{i} (|M_i|/k-1)/h \in \Omega(|M|)
  \]
\qed
\end{proof}

Now, proving lower bounds for linear cancellation-free circuits
is reduced to the problem of finding dense bipartite graphs not containing
$K_{h+1,k+1}$. This problem is known as the Zarankiewicz problem.

\begin{corollary}
For any $\epsilon>0$, there exists a concrete family of
matrices that requires $\Omega(n^{2-\epsilon})$ gates when computed
by a cancellation-free linear circuit.
\label{stronglowerbound}
\end{corollary}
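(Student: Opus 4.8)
The plan is to apply Theorem~\ref{mehlhornish} with an explicit family of bipartite graphs that is as dense as possible while still avoiding some fixed complete bipartite subgraph $K_{h+1,k+1}$. Since the theorem guarantees $|c_f(M)| \in \Omega(|M|)$ whenever the bipartite graph underlying $M$ contains no $K_{h+1,k+1}$, and $|M|$ is the number of ones in $M$ (equivalently, the number of edges in the graph), the corollary reduces entirely to a Zarankiewicz-type question: for every $\epsilon > 0$, exhibit constants $h,k$ and an infinite family of $n\times n$ bipartite graphs with $\Omega(n^{2-\epsilon})$ edges containing no $K_{h+1,k+1}$.

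First I would recall the relevant extremal graph theory input. The Kővári–Sós–Turán theorem shows that a $K_{h+1,k+1}$-free bipartite graph on $n+n$ vertices can have at most $O(n^{2-1/(k+1)})$ edges, so for any target exponent $2-\epsilon$ it suffices to take $k$ large enough that $1/(k+1) < \epsilon$, and then find constructions that actually attain (up to constants) this many edges. This is exactly where the reference to Andreev mentioned in the abstract enters: there are explicit constructions of dense $K_{h+1,k+1}$-free bipartite graphs achieving edge counts of the form $\Omega(n^{2-1/(k+1)})$ for suitable constant $h$ depending on $k$. Given such a family, I would define $M$ to be the corresponding $n\times n$ adjacency matrix, note that $M$ is a concrete (explicitly constructible) matrix, and observe that its underlying bipartite graph is $K_{h+1,k+1}$-free by construction.

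The second step is simply to feed this matrix into Theorem~\ref{mehlhornish}. Because the graph avoids $K_{h+1,k+1}$ with $h,k$ constants, the theorem yields
\[
|c_f(M)| \in \Omega(|M|) = \Omega\!\left(n^{2-1/(k+1)}\right) \subseteq \Omega(n^{2-\epsilon}),
\]
where the last inclusion holds because we chose $k$ with $1/(k+1) < \epsilon$. Since the construction produces such a matrix for infinitely many values of $n$, this gives the desired infinite concrete family requiring $\Omega(n^{2-\epsilon})$ gates for cancellation-free linear circuits.

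I expect the only real content beyond bookkeeping to lie in invoking the right extremal construction: one must cite a family that is both \emph{explicit} (so that the matrices are genuinely concrete, as the corollary claims) and dense enough to reach exponent $2-1/(k+1)$ for arbitrarily large $k$. The Kővári–Sós–Turán bound alone is only an upper bound on edge density and does not by itself produce the graphs; the matching lower-bound constructions — such as Andreev's — are what make the argument go through. Once that family is in hand, the rest is the immediate substitution into Theorem~\ref{mehlhornish} and a choice of constant $k$ large relative to $1/\epsilon$, with $h$ absorbed into the hidden constant of the $\Omega$.
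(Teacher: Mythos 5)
Your proposal is correct and follows essentially the same route as the paper: invoke Andreev's explicit construction of dense $K_{h+1,k+1}$-free bipartite graphs (with $h,k$ constants depending only on $\epsilon$) and feed the resulting adjacency matrix into Theorem~\ref{mehlhornish}. The extra discussion of the K\H{o}v\'{a}ri--S\'{o}s--Tur\'{a}n upper bound is harmless context but not needed for the argument.
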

\begin{proof}
In \cite{andreev86}, Andreev gave for every $\epsilon>0$ an explicit
construction for an infinite family of bipartite graphs
with $2n$ nodes and $n^{2-\epsilon}$ edges that does not contain the
subgraph $K_{h+1,k+1}$ where $h$ and $k$ only depend on $\epsilon$.
Using this construction together with Theorem \ref{mehlhornish} gives
the desired result.\qed
\end{proof}

It should be noted that Brown \cite{brown1966graphs}
gave a simpler construction of
a family of graphs with $\Theta(n)$ vertices and $\Theta(n^{5/3})$ edges
not containing $K_{3,3}$.
Also,  Koll\'{a}r et al. \cite{KollarRS96} 
gave a construction
similar to Andreev's, but where the functions $h,k$ grow slower than
in Andreev's construction.

\section{Conclusion and Open Problems}
What is the value of $\rho(n)$?
If for some $\delta>0$, $\rho(n)\in O(n^{1-\delta})$,
Corollary \ref{stronglowerbound} provides an unconditional 
superlinear lower bound for a concrete family of matrices.

In the proof of Theorem \ref{sierpinskilower},
we did not use the cancellation-free property as extensively
as we did in the proof of Theorem \ref{mehlhornish}.
We only used that there
is no path from $x_{n+1},\ldots,x_{2n}$ to the outputs $y_1,\ldots,y_n$.
Another strategy to prove an unconditional lower bound on the
size of circuits computing the Sierpinski matrix could be to
prove that for any optimal circuit no such path exists.
Then the theorem would follow, even with cancellations.
%This does, however, not seem to be that easy.

\bibliographystyle{splncs03}
\bibliography{cancellationlowerbound}

\begin{thebibliography}{10}
\providecommand{\url}[1]{\texttt{#1}}
\providecommand{\urlprefix}{URL }

\bibitem{cstheorystackexchange}
Aaronson, S.: Thread on cstheory.stackexchange.com\hspace{0pt}.
  \url{http://cstheory.
  stackexchange.com/questions/1794/circuit-lower-bounds-over-arbitrary-sets-of-
  gates}

\bibitem{AlonKW90}
Alon, N., Karchmer, M., Wigderson, A.: Linear circuits over {GF(2)}. SIAM J.
  Comput.  19(6),  1064--1067 (1990)

\bibitem{andreev86}
Andreev, A.E.: On a family of {Boolean} matrices. Vestnik Moskovskogo
  Universiteta  41(2),  97--100 (1986), \mbox{English} translation: Moscow
  Univ.\ Math.\ Bull., 41, 1986 79-82

\bibitem{Blum84}
Blum, N.: A {Boolean} function requiring 3n network size. Theor. Comput. Sci.
  28,  337--345 (1984)

\bibitem{DBLP:conf/birthday/Blum09}
Blum, N.: On negations in {Boolean} networks. In: Albers, S., Alt, H.,
  N{\"a}her, S. (eds.) Efficient Algorithms. Lecture Notes in Computer Science,
  vol. 5760, pp. 18--29. Springer (2009)

\bibitem{BoppanaS90}
Boppana, R.B., Sipser, M.: The complexity of finite functions. In: Handbook of
  Theoretical Computer Science, Volume A: Algorithms and Complexity (A), pp.
  757--804. Laboratory for Computer Science, Massachusetts Institute of
  Technology (1990)

\bibitem{boyar2010combinational}
Boyar, J., Matthews, P., Peralta, R.: Logic minimization techniques with
  applications to cryptology. Journal of Cryptology  (2012), to appear

\bibitem{BoyarMP08}
Boyar, J., Matthews, P., Peralta, R.: On the shortest linear straight-line
  program for computing linear forms. In: Ochmanski, E., Tyszkiewicz, J. (eds.)
  MFCS. Lecture Notes in Computer Science, vol. 5162, pp. 168--179. Springer
  (2008)

\bibitem{brown1966graphs}
Brown, W.: On graphs that do not contain a \mbox{Thomsen} graph. Canad. Math.
  Bull  9(2),  1--2 (1966)

\bibitem{Drucker11a}
Drucker, A.: Limitations of lower-bound methods for the wire complexity of
  {Boolean} operators. Electronic Colloquium on Computational Complexity (ECCC)
   18,  125 (2011)

\bibitem{GalHKPV11}
G{\'a}l, A., Hansen, K.A., Kouck{\'y}, M., Pudl{\'a}k, P., Viola, E.: Tight
  bounds on computing error-correcting codes by bounded-depth circuits with
  arbitrary gates. Electronic Colloquium on Computational Complexity (ECCC)
  18,  150 (2011)

\bibitem{juknabook}
Jukna, S.: Boolean Function Complexity: Advances and Frontiers. Springer Berlin
  Heidelberg (2012)

\bibitem{KollarRS96}
Koll{\'a}r, J., R{\'o}nyai, L., Szab{\'o}, T.: Norm-graphs and bipartite
  tur{\'a}n numbers. Combinatorica  16(3),  399--406 (1996)

\bibitem{Lupanov1965}
Lupanov, O.: On rectifier and switching-and-rectifier schemes. Dokl. Akad. 30
  Nauk SSSR 111, 1171-1174.  (1965)

\bibitem{mehlhorn1979some}
Mehlhorn, K.: Some remarks on \mbox{Boolean} sums. Acta Informatica  12,
  371--375 (1979)

\bibitem{Paul77}
Paul, W.J.: A 2.5 n-lower bound on the combinational complexity of {Boolean}
  functions. SIAM J. Comput.  6(3),  427--443 (1977)

\bibitem{razborov1985permanent}
Razborov, A.: Lower bounds of monotone complexity of the logical permanent
  function. Matematicheskie Zametki  37(6),  887--900 (1985), \mbox{English}
  translation in Mathematical Notes of the Academy of Sci. of the USSR,
  37:485-493, 1985

\bibitem{Razborov1985lower}
Razborov, A.: Lower bounds on the monotone complexity of some {Boolean}
  functions. Doklady Akademii Nauk SSSR  281(4),  798--801 (1985),
  \mbox{English} translation translation: Soviet Mathematics Doklady, 31,
  354-357.

\bibitem{Schnorr74}
Schnorr, C.P.: Zwei lineare untere schranken f{\"u}r die komplexit{\"a}t
  {Boolescher} funktionen. Computing  13(2),  155--171 (1974)

\bibitem{Shannon1949synthesis}
Shannon, C.: The synthesis of two-terminal switching circuits. Bell System
  Technical Journal  28(1),  59--98 (1949)

\bibitem{johnny}
Svensson, J.: Minimizing the Number of XOR Gates in Circuits Computing Linear
  Forms. Master's thesis, Department of Mathematics and Computer Science,
  University of Southern Denmark (2011)

\bibitem{Wegener87}
Wegener, I.: The Complexity of {Boolean} Functions. Wiley-Teubner (1987)

\end{thebibliography}

\end{document}